\title{The compound product distribution; a solution to the distributional equation $ X=AX+1 $}
\author{
  Arrigo Coen\thanks{AUTHOR: Arrigo Coen, Email: coen@ciencias.unam.mx} \\
  Departamento de Matem\'aticas, Facultad de Ciencias \\
  Universidad Nacional Aut\'onoma de M\'exico\\
  M\'exico, CDMX, Apartado Postal 20-726, 01000, M\'exico\\
  \texttt{coen@ciencias.unam.mx} 
}
\newcommand{\prob}[1]{\mathbb{P}\left[#1\right]}
\newcommand{\esp}[1]{\mathbb{E}\left[#1\right]}
\newcommand{\dist}[1]{\mathrm{#1}} 
\newtheorem{thm}{Theorem}
\newcommand{\mb}{\mathbb}
\newcommand{\eqD}{\stackrel{d}{=}} 
\begin{document}
\maketitle

\begin{abstract}
	The solution of $ X=AX+1 $ is analyzed for a discrete variable $ A $ with $ \mathbb{P}\left[A=0\right]>0 $. Accordingly, a fast algorithm is presented to calculate the obtained heavy tail density. To exemplify, the compound product distribution is studied in detail for some particular families of distributions.
\end{abstract}

\keywords{Power laws\and Random coefficient autoregressive model\and Stochastic difference equation\and Heavy tail distributions}

\section{Introduction}
The Kesten's stochastic recurrent equation 
\begin{equation}\label{eq_igual_stoc}
X_t=A_tX_{t-1}+B_t,\qquad t\in\mb{Z},
\end{equation} 
where $ (A_t,B_t) $ are independent of $ X $, has many interesting properties  that makes it appealing for applications, \cite{Kesten1973c}. This framework allows flexibility in describing and estimating the conditional distribution including, for instance, fat tails and stationary behaviors. For example, \cite{DeHaan1989a} analyze the ARCH model  
\[ X_t =Z_t(\beta+\lambda X_{t-1}^2)^{1/2},\qquad t\in\mb{Z}, \] 
as a solution of \eqref{eq_igual_stoc} with $  A_t= \lambda Z_t^2 $, $ B_t = \beta Z_t^2 $. Other uses of this structure range from processes with renewal epochs to model growth-collapse behaviors (see \cite{Boxma2011}) to limit theorems to analyze TCP congestion (see \cite{Dumas} and \cite{Guillemin}). A recent monograph that unites the results of this equation is  \cite{Buraczewski}. Probably the most remarkable property about the solutions to Kesten's equation is that, under general assumptions, the tails of $ X_t $ are of power-law-type even for light-tailed input variables $ (A_t,B_t) $ (see \cite{Kesten1973c}). This implies that these models had many interesting applications on regular variation, weak convergence of probability measures and point processes, as is pointed out in \cite{Buraczewski}. 


The main contribution of the present work is a recursive algorithm to calculate the general density of the solution of \eqref{eq_igual_stoc}, when $B=1  $ and the support of $ A $ is restricted to the non-negative integers with $ \prob{A=0}>0 $. Here and subsequently these restrictions over the density of $ A $ are assumed. In other words, in this manuscript it is analyzed the probability structure of $ X_t $ for the recursive equation
\begin{equation}\label{eq_my_stoc}
X_t=A_tX_{t-1}+1,
\end{equation}
which is equivalent to the analysis of the distributional equality
\begin{equation}\label{eq_my_dist}
X\eqD AX+1.
\end{equation}
Under this framework the fractal structure that reside in \eqref{eq_my_dist} is rich enough to allow many probability structures and, at the same time, does not complicate their use to model the behavior of real heavy tailed data. Moreover, the methodology presented here could be used as a new method to obtain heavy tail distributions. 

The remaining part of this document is organized as follows: In Section \ref{Sec_CP_distribution} it is considered some general properties of the density of $ X $ and also it is studied its probability structure under four different families of densities for $ A $. Section \ref{sec_Algorithm} establishes a computational algorithm to calculate its density. The behavior of the method of moments and maximum likelihood estimators are presented in Section \ref{Sec_estimation}. Furthermore, in this section a fitting exercise with real data is presented. Final points and conclusions are deferred to Section \ref{sec_conclusions}. Additionally, in the Appendix are presented the expressions of skewness and kurtosis, for the four studied distributions.

\section{The compound product distribution}\label{Sec_CP_distribution}

The existence of a solution of \eqref{eq_my_dist} is given in Corollary 2.1.2 and Theorem 2.1.3 of \cite{Buraczewski}. These results allows rewriting $ X $ as
\begin{equation}\label{eq_X_en_A}
X\eqD 1+A_1+A_1A_2+A_1A_2A_3+\ldots ,
\end{equation}
for a sequence $ \left\{A_n\right\}_{n\in\mb{ N}} $ of independent identically distributed  (i.i.d.) random variables. This expression is the reason to say that $ X $ has a compound product (CP) distribution. \cite{Mena2012} also consider the behavior of sums of products to define a probability model and, as in this contribution, they obtain a recursive formula to evaluate the model density. In order to limit the summation in \eqref{eq_X_en_A} it is necessary to put some restrictions on the density of $ A $. In particular, the $\prob{A=0}>0 $ assumption implies that with probability one only a finite number of the terms of the series $ \{A_1\cdots A_n\}_{n\ge1} $ are non-zero. This condition is not particularly restrictive and allows to set many distributional behaviors for $ X $, by changing the distribution of $ A $. The next result implies that a subtle shift over the moments of $ A $ gets amplified in the density of $ X $. 

\begin{thm}
	Let $ X $ be the solution of \eqref{eq_my_dist}. If  $ \esp{A^m}<1 $ for $ m\in\mb{ N} $, then
	\begin{equation}\label{eq_espXm}
	\esp{X^m} =\dfrac{\sum_{i=0}^{m-1}\binom{m}{i}\esp{A^i}\esp{X^i} }{1-\esp{A^m}}. 
	\end{equation}
\end{thm}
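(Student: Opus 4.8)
The plan is to exploit the distributional equation $X \eqD AX+1$ directly. The key identity is that $\esp{X^m} = \esp{(AX+1)^m}$, which follows immediately from \eqref{eq_my_dist} since the two random variables have the same distribution and hence the same $m$-th moment. I would begin by expanding the right-hand side via the binomial theorem.

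First I would write
\begin{equation*}
\esp{X^m} = \esp{(AX+1)^m} = \esp{\sum_{i=0}^{m}\binom{m}{i}A^iX^i}.
\end{equation*}
The crucial structural fact needed here is that in the equation $X \eqD AX+1$, the copy of $A$ multiplying $X$ is \emph{independent} of $X$; this is exactly the setup of \eqref{eq_my_stoc}, where $A_t$ is independent of $X_{t-1}$ and $X_{t-1}\eqD X$. Using this independence I can factor each summand as $\esp{A^iX^i}=\esp{A^i}\esp{X^i}$, so that
\begin{equation*}
\esp{X^m} = \sum_{i=0}^{m}\binom{m}{i}\esp{A^i}\esp{X^i}.
\end{equation*}

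Next I would isolate the top-order term $i=m$, which is the only term on the right that still contains $\esp{X^m}$. Splitting off $\binom{m}{m}\esp{A^m}\esp{X^m}=\esp{A^m}\esp{X^m}$ and moving it to the left gives
\begin{equation*}
\bigl(1-\esp{A^m}\bigr)\esp{X^m} = \sum_{i=0}^{m-1}\binom{m}{i}\esp{A^i}\esp{X^i}.
\end{equation*}
Dividing by $1-\esp{A^m}$, which is legitimate precisely because the hypothesis $\esp{A^m}<1$ guarantees this factor is strictly positive, yields \eqref{eq_espXm}.

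I expect the main obstacle to be not the algebra, which is routine, but the justification that $\esp{X^m}$ is finite so that these manipulations are valid rather than formally manipulating $\infty$. I would address this by arguing inductively on $m$: assuming $\esp{X^i}<\infty$ for all $i<m$ (the base case $\esp{X^0}=1$ being trivial), the finiteness of $\esp{X^m}$ must be established before dividing. The condition $\esp{A^m}<1$ is exactly what ensures integrability here, since from the series representation \eqref{eq_X_en_A} the $m$-th moment involves sums of products of powers of independent copies of $A$, and the geometric-type control coming from $\esp{A^m}<1$ keeps these sums convergent. A careful writeup would either invoke this integrability as a preliminary lemma or note that the recursion simultaneously proves finiteness and computes the value.
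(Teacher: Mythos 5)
Your proposal follows exactly the paper's argument: apply the binomial expansion to $X \eqD AX+1$, use independence of $A$ and $X$ to factor $\esp{A^iX^i}=\esp{A^i}\esp{X^i}$, isolate the $i=m$ term, and divide by $1-\esp{A^m}$. Your additional care about the finiteness of $\esp{X^m}$ is a genuine refinement of a point the paper's two-line proof passes over silently, but the core approach is the same.
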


\begin{proof}
	The binomial expansion applied to \eqref{eq_my_dist} implies
	\[ \esp{X^m} =\sum_{i=0}^m\binom{m}{i}\esp{A^i}\esp{X^i},\]
	and \eqref{eq_espXm} follows.
\end{proof}

\begin{table}[h!]
	\caption{Conditions for mean and variance to be finite for the CP distribution for different distributions for the variable $ A $.}  
	\label{tab_condiciones_E_y_Var_de_X}
	\centering 
	\begin{tabular}{lcc} 
		\toprule[\heavyrulewidth]\toprule[\heavyrulewidth]
		\textbf{$ A\sim $}  & \textbf{Conditions for $ \esp{X}<\infty $} & \textbf{ Conditions for $ \esp{X^2}<\infty $ } \\ 
		\midrule
		$ \dist{Po}(\lambda) \quad \quad $ & $ \lambda<1 $ &  $ \lambda< \dfrac{\sqrt{5}-1}{2} $\\
		
		$ \dist{Bin}(n,p) $ & always for $ n=1 $ & always for $ n=1 $\\
		& $ p<\dfrac{1}{n} $, $ n\ge2 $ & $ p<\frac{1}{2} \sqrt{\frac{5 n-4}{(n-1)^2 n}}-\frac{1}{2 (n-1)} $, $ n\ge2 $\\[1.5ex]
		
		$ \dist{ NB}(r,p) $ &$ \dfrac{r}{1+r}<p $&$ \dfrac{2}{3}<p $, $ r=1 $\\
		&&$ \frac{2 r^2+r}{2 \left(r^2-1\right)}-\frac{1}{2} \sqrt{\frac{5 r^2+4 r}{\left(r^2-1\right)^2}}<p $, $ r\ge2 $\\[1.5ex]
		
		$ \dist{Geo}(p) $ & $ \dfrac{1}{2}<p $ & $ \dfrac{2}{3}<p $\\
		\bottomrule[\heavyrulewidth] 
	\end{tabular}
\end{table}

\Cref{tab_condiciones_E_y_Var_de_X} presents the conditions for finitude of the first and second moments and \Cref{tab_E_y_Var_de_X} presents the mean and variance, for the CP distribution under the four families of distributions. The important point to note from these two tables is the heavy tail of the distribution of $ X $. For instance, to have a finite variance under $ A\sim\dist{Po}(\lambda) $ we need $ \lambda< (\sqrt{5}-1)/2 \approx 0.62$. These tables exemplify how heavy tailed the CP distribution could be.

\subsection{Examples of families of compound product densities}

This section is focus on four different families of distributions for $ A $: Poisson ($ \dist{Po}(\lambda) $, with mean $ \lambda $), binomial ($ \dist{Bin}(n,p) $, with mean $ np $), negative binomial ($ \dist{NB}(r,p) $, with mean $ pr/(1-p)$) and geometric ($ \dist{Geo}(p) $, with mean $ p/(1-p)$) . Although the geometric distribution is a particular case of the negative binomial family, its analysis expose some properties that could be obscure under the negative binomial framework. An example of the versatility of the negative binomial distribution is presented in \cite{Leisen2019}, to model the stationary time series behavior. On the contrary, an analysis of the Bernoulli distribution it is not worthy since in this case $ X $ is a  geometric random variable. 

\begin{table}[h!] 
	\caption{Mean and variance of the CP distribution for different distributions of $ A $. In each case the obtain formulas are equal to $ \infty $ if the denominator are equal or less to zero. \Cref{tab_condiciones_E_y_Var_de_X} present the detailed conditions for the existence of the first and second moments.}  
	\label{tab_E_y_Var_de_X}
	\centering 
	\begin{tabular}{lcc} 
		\toprule[\heavyrulewidth]\toprule[\heavyrulewidth]
		\textbf{$A\sim$}  & \textbf{$\esp{X}$} & \textbf{$ \dist{Var}(X) $ } \\ 
		\midrule
		$ \dist{Po}(\lambda) \quad \quad $ & $ \dfrac{1}{1-\lambda} $ & $ -\dfrac{\lambda }{(\lambda -1)^2 \left(\lambda ^2+\lambda -1\right)}$  \\[1.5ex]
		$ \dist{Bin}(n,p) $ & $ \dfrac{1}{1-n p} $ & $ \dfrac{n (p-1) p}{(n p-1)^2 \left(n^2 p^2-n (p-1) p-1\right)}$  \\[1.5ex]
		$ \dist{NB}(r,p) $ &$ \dfrac{p}{p r+p-r} $& $ \dfrac{(p-1) p^2 r}{(p r+p-r)^2 \left(p^2 \left(r^2-1\right)-p r (2 r+1)+r (r+1)\right)}$\\[1.5ex]
		$ \dist{Geo}(p) $ & $ \dfrac{p}{2 p-1} $ & $ -\dfrac{(p-1) p^2}{(1-2 p)^2 (3 p-2)} $\\
		\bottomrule[\heavyrulewidth] 
	\end{tabular}
\end{table}

It is possible to study the CP density as the result of a geometric compound process with dependent terms. An equivalent representation of the CP distribution is given as the distribution of the random variable $ X $, given as
\begin{equation}\label{eq_X_cond_N}
X=1+\sum_{i=1}^{ N}A_1'A_2'\ldots A_i', 
\end{equation}
where $  N \sim\dist{geo}(p_0)$ and $ \prob{A_j'=n}=p_n/(1-p_0) $, with $ p_n:=\prob{A=n} $. This expression is obtain by conditioning \eqref{eq_X_en_A} to the random variable $ N=\inf\{i\in \mb{ N}: A_i=0   \} $. One advantage of using this representation is that it gives an straight way to calculate the density of $ X $. By conditioning over $  N $ and applying \eqref{eq_X_cond_N},  it is obtained that $ \prob{X=0}=p_0 $, and for $ n\ge1 $,
\begin{align}
\prob{X=n}&=\sum_{k=0}^\infty \prob{ \left. 1+\sum_{i=1}^N A_1'\ldots A_i'=n \right|  N=k }\prob{ N=k} \nonumber \\ 
&=\sum_{k=0}^{n-1} \prob{ \left. \sum_{i=1}^N A_1'\ldots A_i'=n-1  \right|  N=k }p_0(1-p_0)^{k-1}, \label{eq_probX_N}
\end{align}
where the last equality is given by the fact that 
\[ \prob{ \left. \sum_{i=1}^N A_1'\ldots A_i'=n-1  \right|  N=k }=0,\qquad k=n,n+1,\ldots . \]
The problem with \eqref{eq_probX_N} is that the number of combinations of $ A_1',A_2',\ldots,A_N' $ that fulfills  $ \sum_{i=1}^N A_1'\ldots A_i'=n-1 $ is computational expensive to calculate, since is not available a closed form expression. To solve this issue an recursive algorithm to compute $ \prob{X=n} $ is presented in Section \ref{sec_Algorithm}. \Cref{fig_cambio_Pois,fig_cambio_geom} present the changes of the CP density for different values of the parameters.  \Cref{figsub_pois_l_menor_1} shows the change in the Poisson density for values $ \lambda<1 $ (finite mean case) and \Cref{figsub_pois_l_mayor_1} for values $ \lambda>1 $ (infinite mean case). An equivalent analysis is presented in \Cref{fig_cambio_geom} for the geometric distribution.

\begin{figure}[h]
	\begin{subfigure}{0.5\textwidth}
		\includegraphics[width=\linewidth]{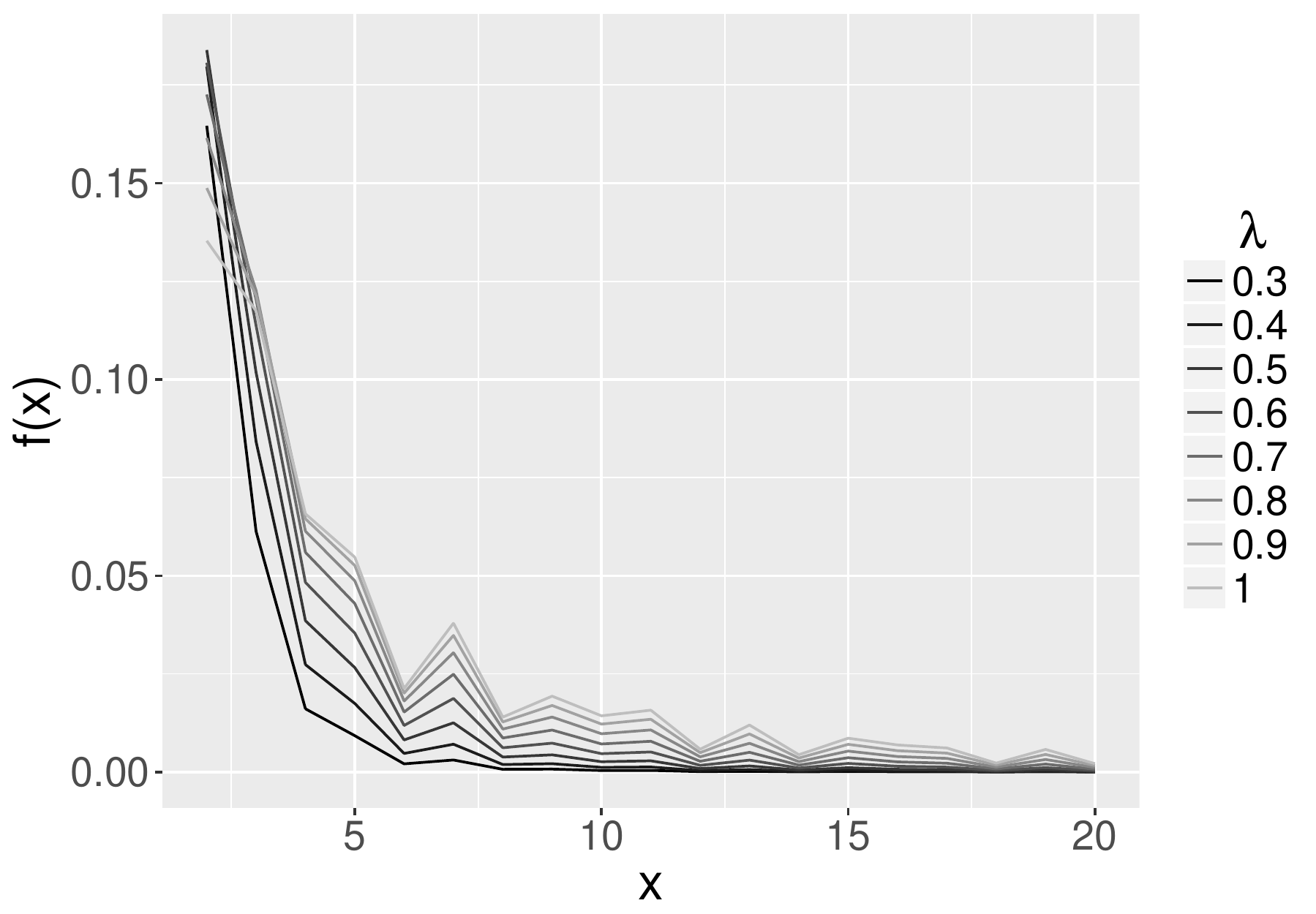} 
		\caption{CP density for $ A\sim\dist{Po} $ with $ \lambda\le1$}
		\label{figsub_pois_l_menor_1}
	\end{subfigure}
	\begin{subfigure}{0.5\textwidth}
		\includegraphics[width=\linewidth]{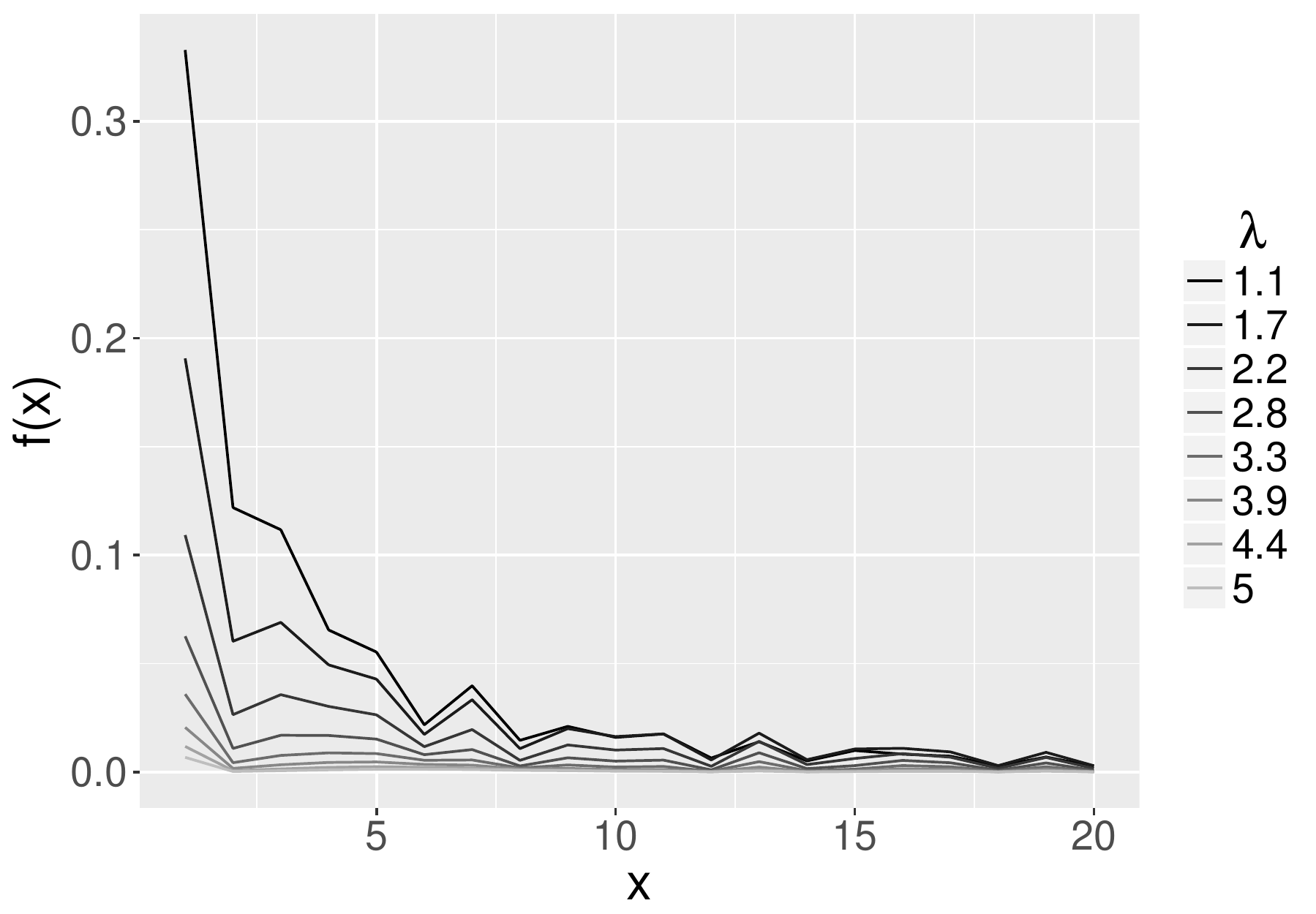}
		\caption{CP density for $ A\sim\dist{Po} $ with $ \lambda>1$}
		\label{figsub_pois_l_mayor_1}
	\end{subfigure}
	\caption{Densities of $ X $ with $ A\sim\dist{Po}(\lambda) $ with different values of $ \lambda $. \Cref{figsub_pois_l_menor_1} present different densities for $ \lambda\le1 $ which implies a finite value for $ \esp{X} $, and \Cref{figsub_pois_l_mayor_1} for $ \lambda>1 $ in which case $ \esp{X}=\infty $.}
	\label{fig_cambio_Pois}
\end{figure}

\begin{figure}[h]
	\begin{subfigure}{0.5\textwidth}
		\includegraphics[width=\linewidth]{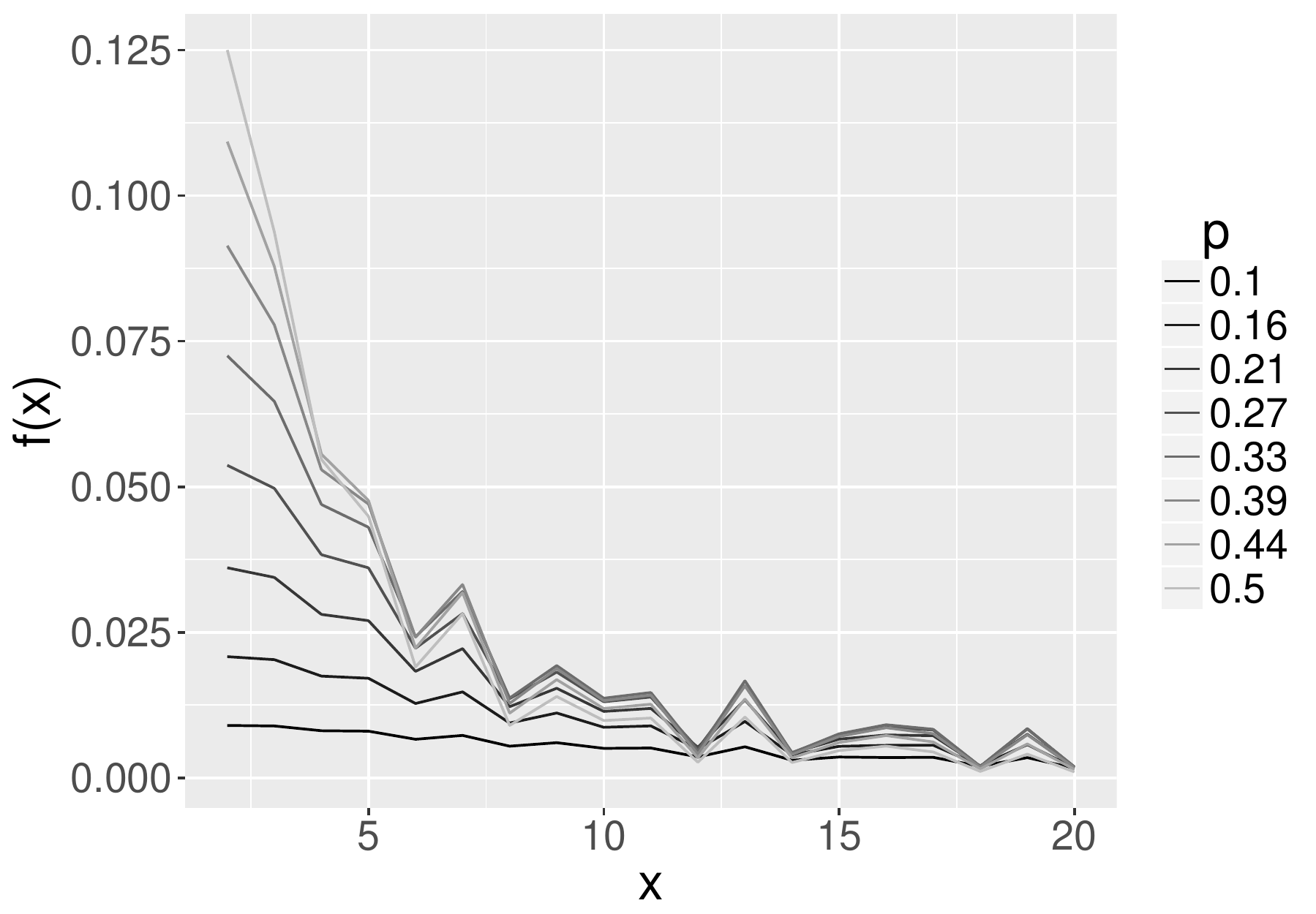} 
		\caption{CP density for $ A\sim\dist{Geo}(p) $ with $ p\le1$}
		\label{figsub_geom_l_menor_1}
	\end{subfigure}
	\begin{subfigure}{0.5\textwidth}
		\includegraphics[width=\linewidth]{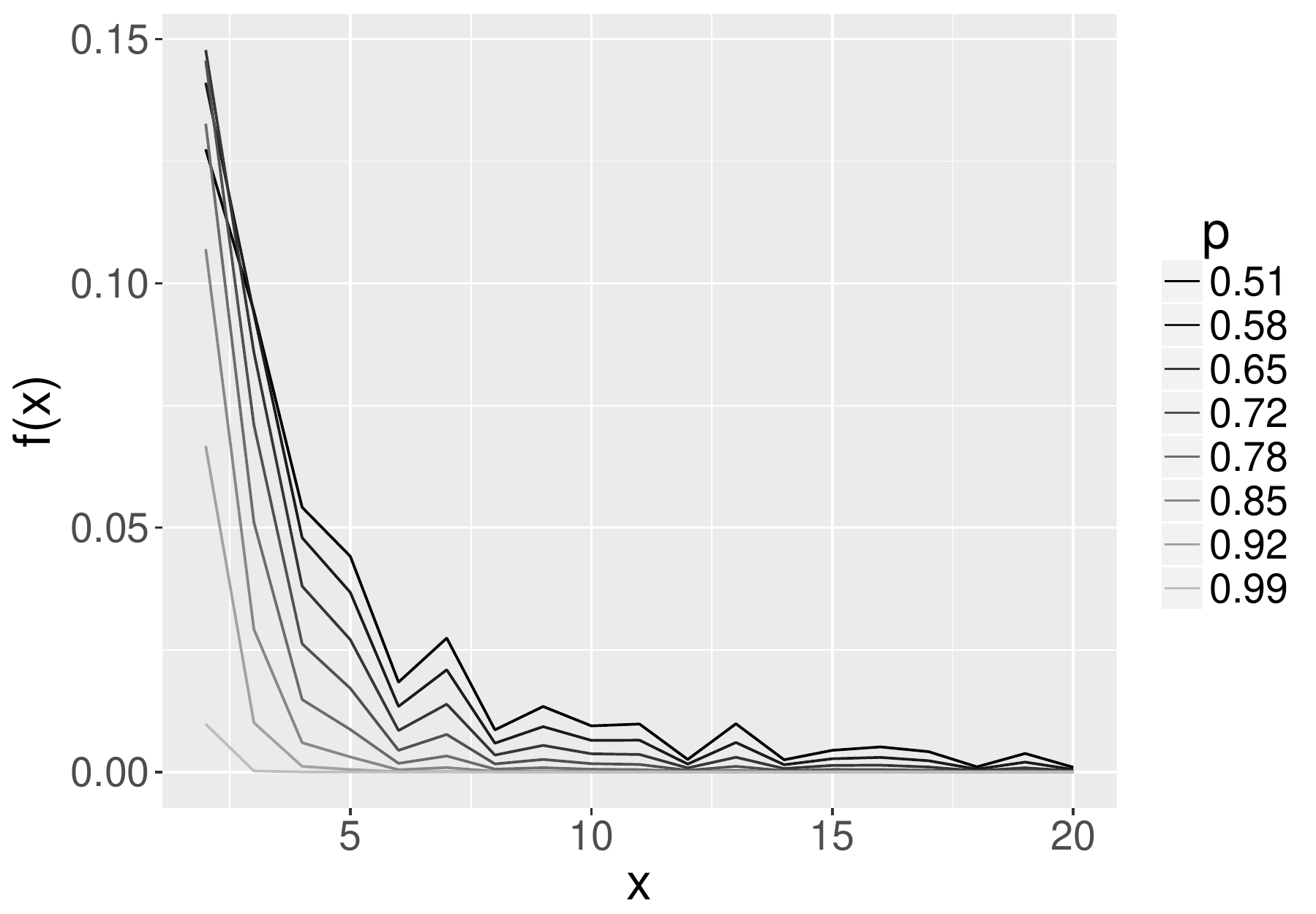}
		\caption{CP density for $ A\sim\dist{Geo}(p) $ with $ p>1$}
		\label{figsub_geom_l_mayor_1}
	\end{subfigure}
	\caption{Densities of $ X $ with $ A\sim\dist{Geo}(p) $ with different values of $ p $. \Cref{figsub_pois_l_menor_1} present different densities for $ p\le1 $ which implies a finite value for $ \esp{X} $, and \Cref{figsub_pois_l_mayor_1} for $ p>1 $ in which case $ \esp{X}=\infty $.}
	\label{fig_cambio_geom}
\end{figure}

\section{An algorithm to calculate the compound product density}\label{sec_Algorithm}

In this section is presented a computational algorithm to calculate $ \prob{X=n}$ for $ n\ge1 $.  The problem to calculate these probabilities centers on the next question: \textit{Which are the different vectors $ (\xi_1,\ldots,\xi_{n-1}) \in\mb{ N}^{n-1}$ that fulfill the next equation?} 
\begin{equation}\label{eq_prods_xi}
n= 1+\xi_1+\xi_1\xi_2+\xi_1\xi_2\xi_3+\ldots+\xi_1\cdots \xi_{n-1}.
\end{equation}

To confront this problem the algorithm here presented focuses around two strategies: prime decomposition and recursion. To explain how the algorithm works, let us first focus on the decomposition of a natural number into two factors. Let us assume that $ n-1 $ has the prime decomposition $ n-1 = z_1^{a_1}\cdots z_r^{a_r} $. This implies that the set $\Delta_n= \{(i,j)\in\mb{ N}^2: i*j=n-1\} $ has cardinality equal to the number of ways to choose how many factors goes to the number $ i $; this cardinality is then equal to $ (a_1+1)( a_2+1)\cdots (a_r+1) $. By conditioning on the values of $ A $, we obtain the recursive equality
\begin{equation}\label{eq_f_X_as_sum}
\prob{X=n} = \sum_{(i,j)\in\Delta_n} p_i  \prob{X=j},\qquad n\ge1,
\end{equation}
since $ X $ and $ A $ are independent. Under these observations is obtained the \Cref{alg_f_X}, which returns all the values $ \prob{X=i} $ for $ i=1,\ldots,n $, for a fixed $ n $. Let us mention some particularities of this algorithm. Using the notation $ \omega(n) $ for the number of different prime factors of $ n $, and $ \Omega(n) $ for the total number of prime factors of $ n $, it is well known (see \cite{hardy1927collected}) that these quantities behave asymptotically as
\begin{equation}\label{eq_omega_y_Omega}
\omega(n)\sim \log\log n \qquad \text{and} \qquad\Omega(n)\sim \log\log n. 
\end{equation}
This two asymptotic behaviors contributes to the speed of the algorithm. Even though \Cref{alg_f_X} has a loop nested on other loop, it is fast since \eqref{eq_omega_y_Omega} give each loop a few terms. These observations are depicted in \Cref{Fig_Time_consumption}. This figure compares the computational time of $\prob{X=n} $ by direct calculation and by \Cref{alg_f_X}. To obtain the direct calculation it is applied \eqref{eq_probX_N}. As a consequence of \eqref{eq_omega_y_Omega}, the time consumption of this algorithm is so suitable that it is almost lineal. For instance, the algorithm takes one second for each thousand values added to $ n $ from $ 1 $ to $ 10^7 $, computed on Intel i7 processor. Also, it is important to mention that with the current computer advances the calculation of the prime decomposition is not an issue of time or memory consumption for this algorithm.

\begin{algorithm}[h!]
	\begin{algorithmic}[1]
		\STATE{ $ p^X\gets (p_0,0,0,\ldots,0) $\hspace{3cm} (a vector of size $ n $) } 
		
		\FOR{ $ i\gets 2 $ \TO $n $} 
		\STATE { $ m \gets $ all factors of $ i-1$ \hspace{2cm} (a two column matrix)}
		\STATE {$ k\gets  $ size of first dimension of $ m $}
		\STATE {$ v\gets (0,0,\ldots,0) $ \hspace{3.4cm}  (a vector of size $ k $) }
		\FOR{ $ j\gets 1 $ \TO $ k $ }\label{for_loop_2}
		\STATE {$v_i\gets p_{m_{i,1}} * p^X_{m_{i,2}}  $ }
		\ENDFOR
		\STATE {$ p^X_i \gets \sum_{i=1}^k v_i $}
		
		\ENDFOR
		\RETURN $ p^X $
	\end{algorithmic}
	\caption{Density of $X $ at points $ \{1,2,\ldots,n\} $ given $ p_0,\ldots,p_n $}
	\label{alg_f_X}
\end{algorithm}

\begin{figure}[!h]
	\centering
	\includegraphics[scale=0.7]{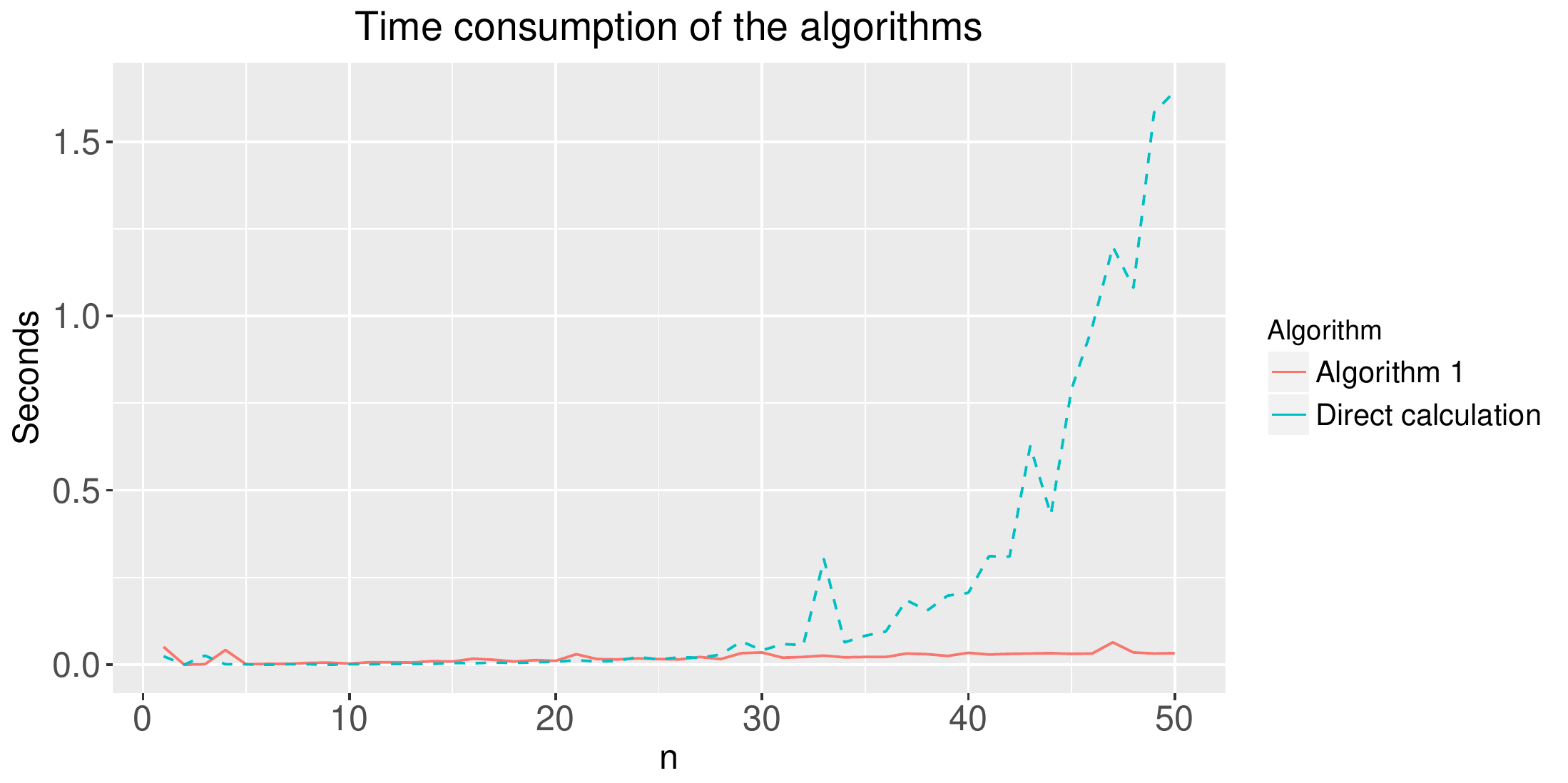}
	\caption{Time comparison to calculate $\prob{X=n} $, by direct calculation and by using \Cref{alg_f_X}. Under \Cref{alg_f_X} the computational time increases almost lineal, in comparison to the exponential time by calculating directly using \eqref{eq_probX_N}.}
	\label{Fig_Time_consumption}
\end{figure}

\section{Estimation of parameters}\label{Sec_estimation}

\subsection{Estimators with the method of moments}

\Cref{tab_method_moments} presents the estimators of method of moments for the CP distribution under the Poisson, binomial, negative binomial and geometric distributions. The existence of these estimators depend on the existence of their respective moments. Consequently, these estimators behave badly when the true parameters are near to conditions that make the correspondent moments infinite (see \Cref{tab_condiciones_E_y_Var_de_X}). 
\begin{table}[h!] 
	\caption{Moment estimators for the CP distribution. For simplicity of notation we denote by $ \mu' = \frac{1}{n}\sum_{i=1}^{n}X_i$ and $ \mu'_2 = \frac{1}{n}\sum_{i=1}^{n}X_i$, the first and second population moments, respectively. }  
	\label{tab_method_moments}
	\centering 
	\begin{tabular}{lll} 
		\toprule[\heavyrulewidth]\toprule[\heavyrulewidth]
		\textbf{$ A\sim $}  & \textbf{Moment matching estimators} \\ 
		\midrule
		$ \dist{Po}(\lambda) \quad \quad $ &$ \hat{\lambda}= 1-\dfrac{1}{\mu'} $ & \\[1.5ex]
		$ \dist{Bin}(n,p) $ & $ \hat{n}=\frac{(\mu' -1)^2 \mu'_2}{(2 \mu' -1) \mu'^2+\left(\mu'^2-3 \mu' +1\right) \mu'_2}$ & $ \hat{p}= \frac{(2 \mu' -1) \mu'^2+\left(\mu'^2-3 \mu' +1\right) \mu'_2}{(\mu' -1) \mu'  \mu'_2} $\\[1.5ex]
		$ \dist{ NB}(r,p) $  & $ \hat{r}= -\frac{(\mu' -1)^2 \mu' _2}{(2 \mu' -1) \mu'^2+\left(\mu'^2-3 \mu' +1\right) \mu'_2}$ & $ \hat{p}= -\frac{(\mu' -1) \mu'  \mu'_2}{(2 \mu' -1) \left(\mu'^2-\mu'_2\right)}$\\[1.5ex]
		$ \dist{ Geo}(p) $ &$ \hat{p}= \dfrac{\mu' }{2 \mu' -1} $&\\
		\bottomrule[\heavyrulewidth] 
	\end{tabular}
\end{table}
The confidence intervals for the moment estimators of the Poisson and geometric distributions are presented in \Cref{figsub_pois,figsub_geom}, respectively. In these figures the black line represents the true value of the estimators and the gray area is its confidence interval from 5\% to 95\%, obtained through simulation of 1000 variables for each parameter value. In both cases the confidence interval is slim since their relative error is small.


\begin{figure}[h!]
	\begin{subfigure}{0.5\textwidth}
		\includegraphics[width=\linewidth]{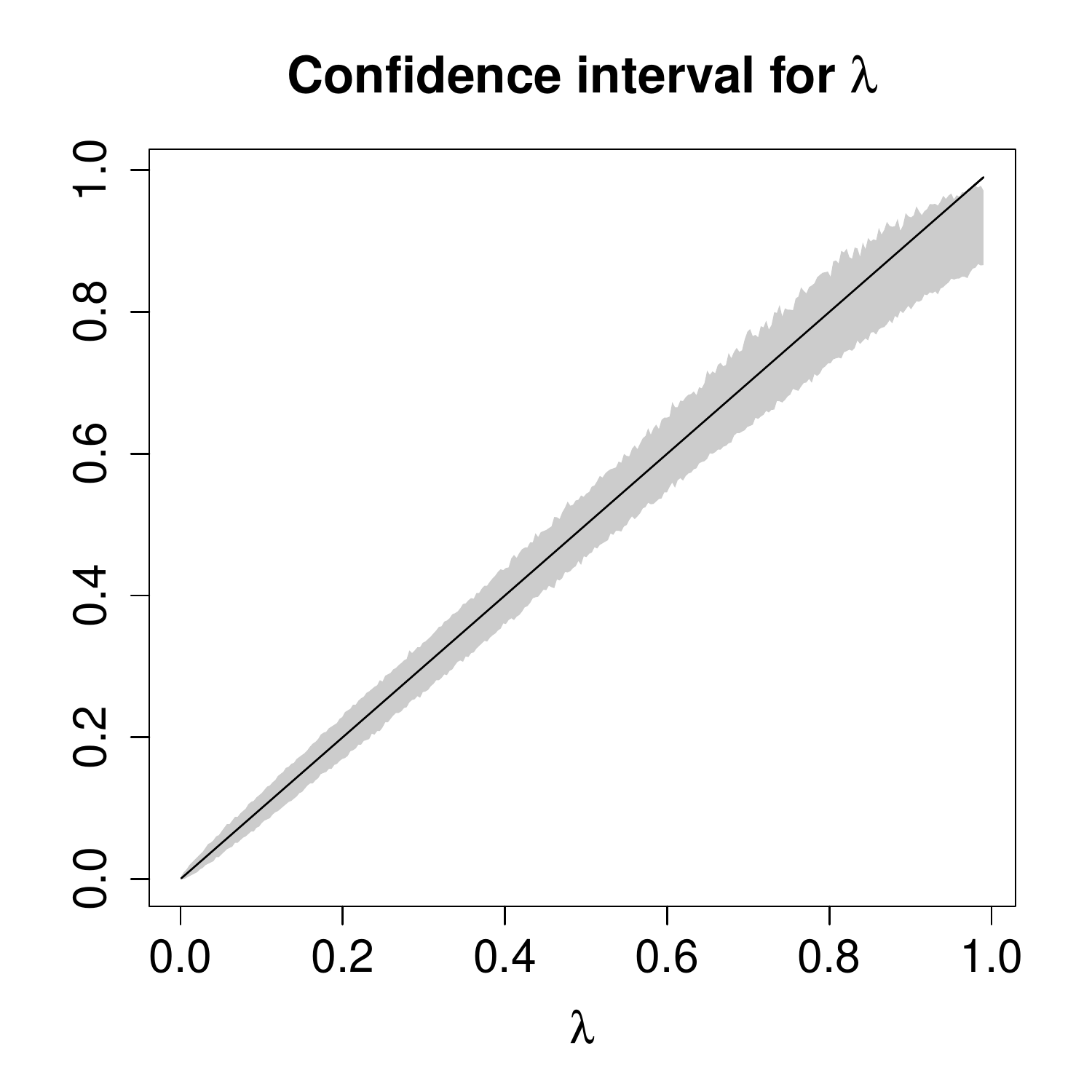} 
		\caption{Poisson distribution}
		\label{figsub_pois}
	\end{subfigure}
	\begin{subfigure}{0.5\textwidth}
		\includegraphics[width=\linewidth]{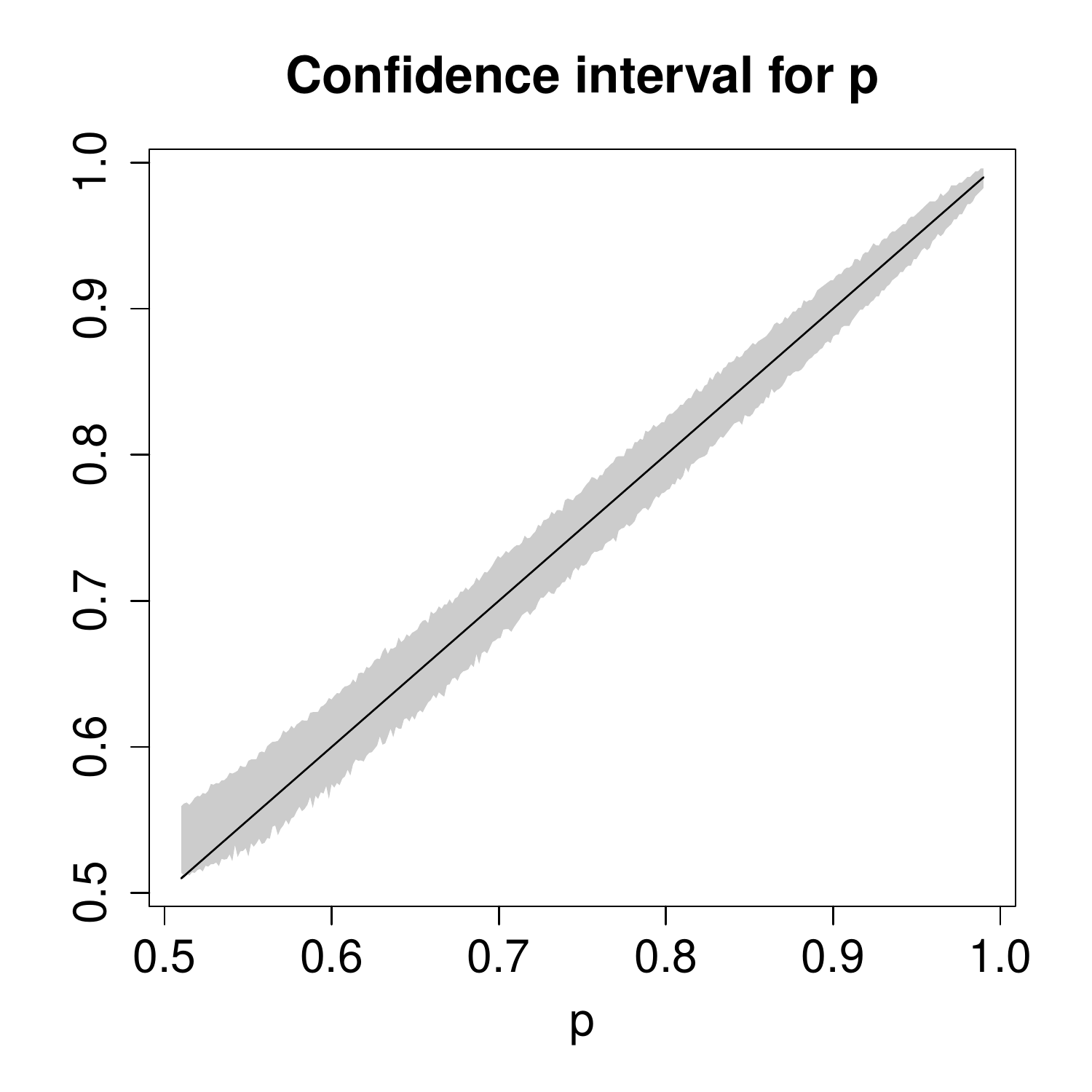}
		\caption{Geometric distribution}
		\label{figsub_geom}
	\end{subfigure}
	\caption{Confidence intervals for the parameter estimation of the method of moments. The black line is the true value and the gray region represents the confidence interval from 5\% to 95\%, obtained through simulation of 1000 variables for each parameter value. \Cref{figsub_pois} presents the confidence interval for the parameter $ \lambda $ with a product distribution Poisson and  \Cref{figsub_geom} the analogous for the parameter $ p $ of a geometric distribution. }
	\label{fig_Interval_Pois_geom}
\end{figure}

In the case of the binomial and negative binomial densities, it is important to notice that extra restrictions must be applied to confine the values of  $ \hat{n} $ and $ \hat{r} $ to the positive integers. For instance, to set these restrictions in the binomial case, one could solve the method of moments equations for the closest integer next  $ \hat{n} $ and then substitute this value in $ \hat{p} $ using the relation  
\[  \hat{p}= \dfrac{1}{\hat{n}\mu'} \]
Similarly, in the negative binomial case
\[ \hat{p} = \hat{r} \frac{\mu  \left((2 \mu -1) \mu ^2+\left(\mu ^2-3 \mu +1\right) \mu _2\right)}{(\mu -1) (2 \mu -1) \left(\mu ^2-\mu _2\right)}. \]
However, these extra restrictions hamper the accuracy of these estimators.

\subsection{Maximum likelihood estimators}

Using \Cref{alg_f_X} one could obtain the MLE estimators, since it is computationally fast enough to compute them. Like with the method of moments, the estimators become inferior as the value of the true parameters tends to the frontiers of a finite mean distribution. In contrast to the method of moments, the MLE method do not have problems to estimate $ \hat{n} $  and $ \hat{r} $, for the binomial and negative binomial distributions. Moreover, the restricting of the searching region to the positive integers speeds-up its computation.

\begin{figure}[h!]
	\begin{subfigure}{0.45\textwidth}
		\includegraphics[width=\linewidth]{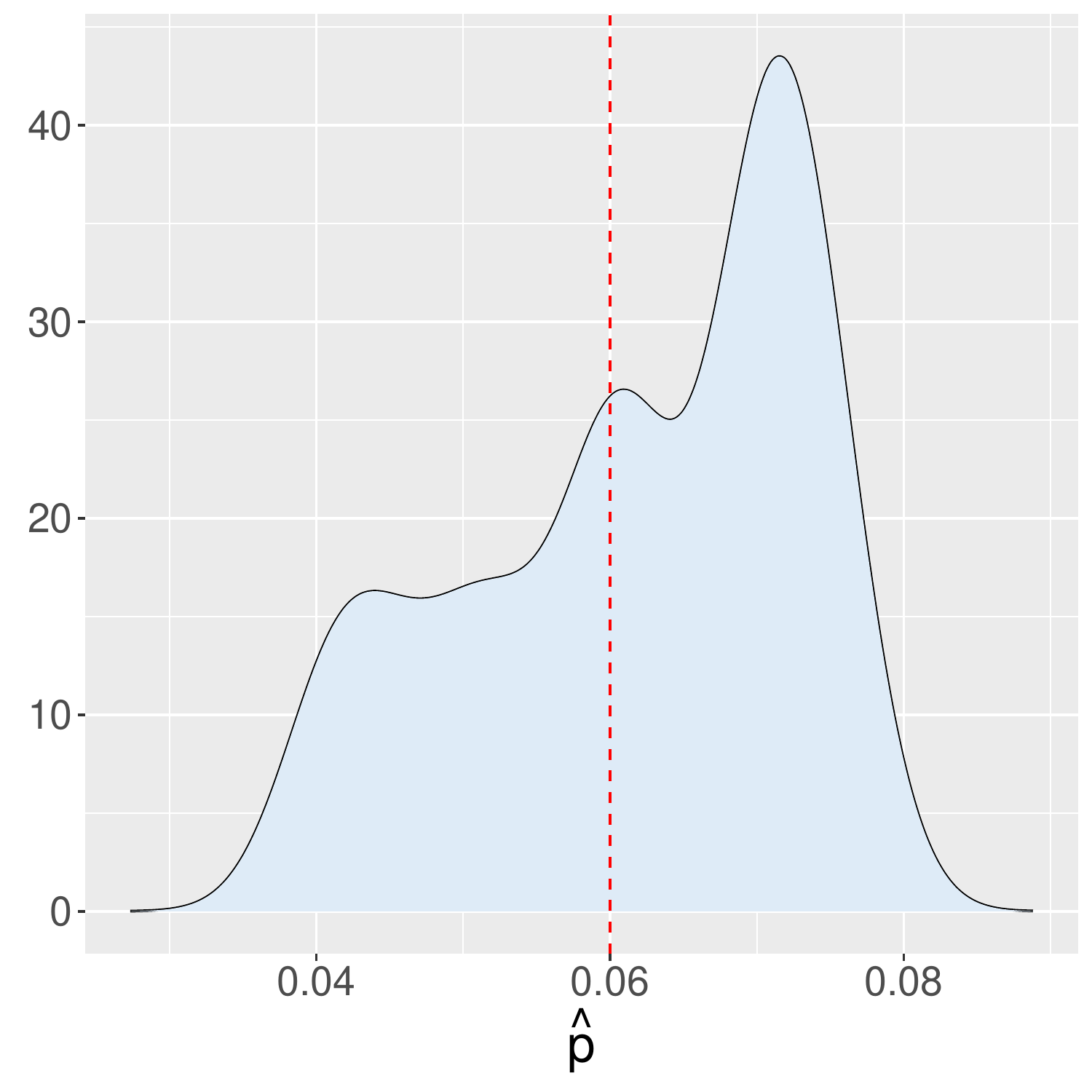} 
		\caption{Binomial distribution}
		\label{figsub_binom}
	\end{subfigure}
	\begin{subfigure}{0.45\textwidth}
		\includegraphics[width=\linewidth]{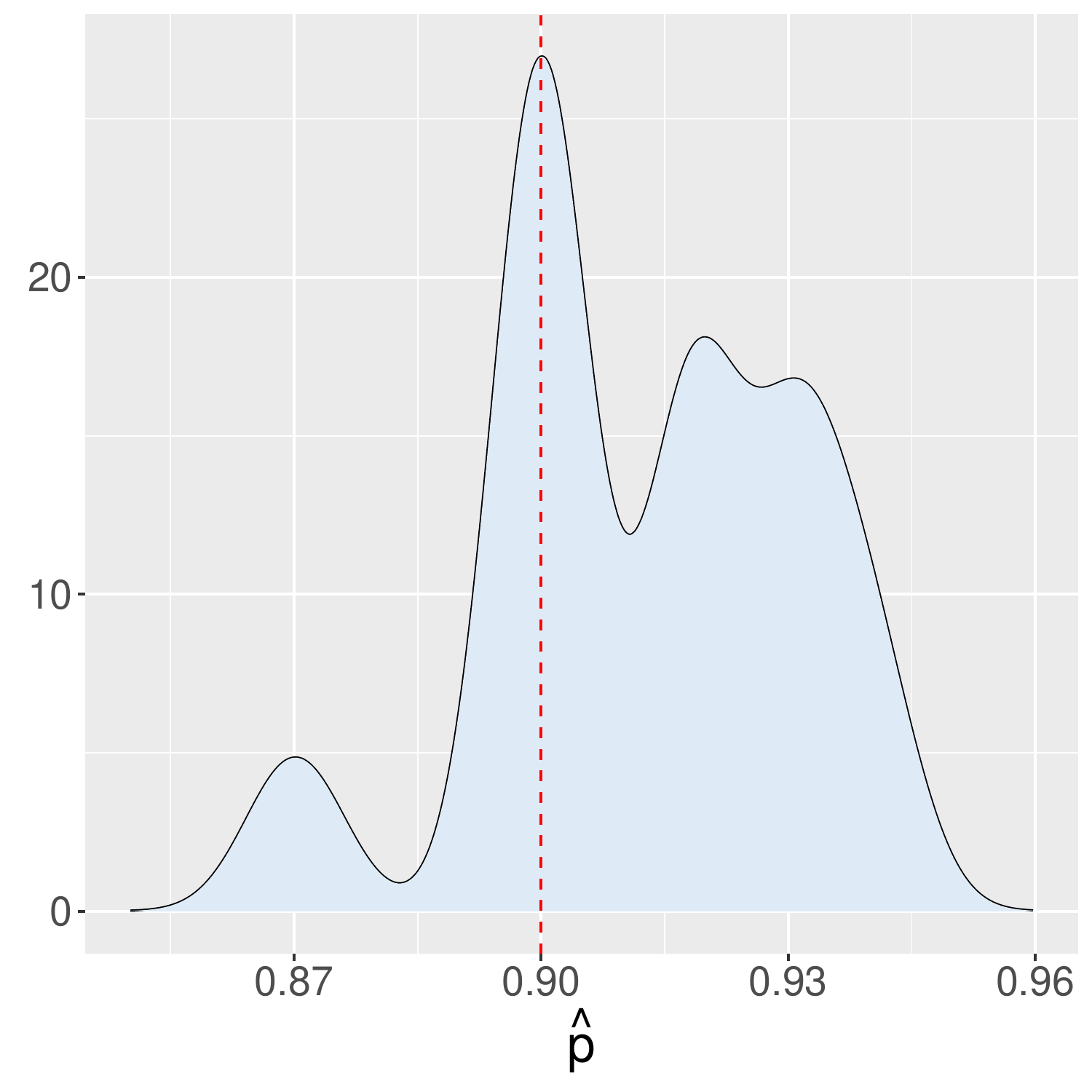}
		\caption{Negative binomial distribution}
		\label{figsub_binneg}
	\end{subfigure}
	\caption{Density of the MLE of $ p $ for the binomial and negative binomial distributions. The curve represents the density of the MLEs; these are the densities of 1000 MLEs of samples of sizes 100. The dotted line is the true value of the $ p $. \Cref{figsub_binom} 
		corresponds to $ \dist{Bin}(6,0.06) $ and \Cref{figsub_binneg} to $ \dist{NB}(4,0.9) $.}
	\label{fig_MLE_bin_nbin}
\end{figure}

\Cref{figsub_binom,figsub_binneg} display a sample density of the MLEs for the binomial and negative binomial distributions, respectively; these densities were obtained through the simulation of 1000 samples of sizes 100 for each point estimator for the fixed true value. In both cases, the densities center around the true value of the distribution. Furthermore, the errors of the estimators $ \hat{n} $  and $ \hat{r} $ were at most one unit apart for the true values 90\% of the time,  for $ n,r\in\{2,3,4,5,6,7,8\} $ and $ p\in(0,1) $.

\subsection{Application to real data}

To present an example of how to fit the CP distribution to real data, here are fitted the four distributional models to the Moby Dick word count data (see \cite{Alsmeyer2013b}). This data is the words occur in the novel Moby Dick, which is well-known to have a heavy-tail behavior. The Poisson and geometric distributions are fitted using the method of moments, and the binomial and negative binomial using the MLE method. To select among the four model it is calculated the Akaike information criterion. The results of the estimation for the Moby Dick data are presented in \Cref{tab_AIC}. Under this criterion the best model is the $ \dist{Bin}(2,.37) $, which has finite mean but infinite variance.

\begin{table}[h!] 
	\caption{Estimated values and Akaike information criterion, for the Moby Dick data. }  
	\label{tab_AIC}
	\centering 
	\begin{tabular}{llll} 
		\toprule[\heavyrulewidth]\toprule[\heavyrulewidth]
		\textbf{$ A\sim  $}  & \textbf{Estimators} && \textbf{AIC} \\ 
		\midrule
		$ \dist{Po}(\lambda) $ & $\hat{\lambda}=0.48$ &  & 174343.18 \\ 
		
		$ \dist{Bin}(n,p) $ & $ \hat{n}=2 $ & $ \hat{p}= 0.37$ & 142566.88 \\ 
		
		$ \dist{NB}(r,p) $ & $\hat{r}= 10  $& $ \hat{p} =0.94 $& 157936.02 \\ 
		
		$ \dist{Geo}(p) $ &$ \hat{p} = 0.67$ &  & 173737.18\\ 
		\bottomrule[\heavyrulewidth] 
	\end{tabular}
\end{table}

\section{Conclusions}\label{sec_conclusions}

In this manuscript is presented a fast algorithm to calculate the general density of the solution of the distributional equation $ X= AX+1$, for a discrete random variable $ A $  with $ \prob{A=0}>0 $. To exhibit the behavior of this distribution, four commonly used parametric families of distributions for $ A $ are particularly studied. The results here presented show that practitioners can use this methodology to generate new  heavy tailed distributions. The evidence of the real data experiment suggests that the estimation method works correctly for the these four families of distributions.

\section{Acknowledgments}

The author gratefully acknowledges the helpful suggestions of Prof. Bego\~na Fern\'andez during the preparation of the manuscript. This research was supported by a DGAPA-UNAM Posdoctoral Scholarship.

\bibliographystyle{plainnat}
\bibliography{ArXiv_CP_dist.bib}
\newpage
\section*{Appendix}
\begin{table}[H]
	\caption{Skewness of the infinite product distribution for different $ A $ densities. In each case the obtain formulas are equal to $ \infty $ if the denominator are equal or less to zero.}  
	\label{tab:example_multirow}
	\centering 
	\begin{tabular}{lc} 
		\toprule[\heavyrulewidth]\toprule[\heavyrulewidth]
		\textbf{$ A\sim $}  & \textbf{$ \dist{Skewness}(X) $}   \\ 
		\midrule
		$ \dist{Po}(\lambda) $ & $\dfrac{(\lambda -1)^3 \left(\lambda ^2+\lambda -1\right)^2 \left(5 \lambda ^2+2 \lambda +1\right)}{\lambda ^2 \left(\lambda ^3+3 \lambda ^2+\lambda -1\right)}$\\[4ex]
		
		$ \dist{Bin}(n,p) $ & $ -\left(\left((n p-1)^3 \left(-n^2 p^2+n (p-1) p+1\right)^2\right.\right. $ \\
		&$ \left.\left.4 (n-1) n p^3+n (6-5 n) p^2-2 (n-1) p-1\right)\right/ $\\
		&$ \left.n^2 (p-1)^2 p^2 \left(n^3 p^3-3 n^2 (p-1) p^2+n \left(2 p^2-3 p+1\right) p-1\right)\right) $ \\[4ex]
		
		$ \dist{NB(r,p)} $ &$ \left((p r+p-r)^3 \left(p^2 \left(r^2-1\right)-p r (2 r+1)+r (r+1)\right)^2\right. $\\
		&$ \left(p r^2 \left(-4 r^2+r-5\right)+r \left(r^3+r+2\right)+p^3 \left(-4 r^4+3 r^3-3 r^2+3 r+1\right)+\right. $\\
		&$ \left.\left.\left.p^4 \left(r^4-r^3+r-1\right)+p^2 \left(6 r^4-3 r^3+7 r^2-6 r+1\right)\right)\right)\right/ $\\
		&$ (p-1)^2 p^3 r^2 \left(p^3 \left(r^2-r+1\right)+p^2 r (2-3 r)+3 p r^2-r (r+1)\right) $\\[4ex]
		
		$ \dist{Geo(p)} $ & $\dfrac{(2-3 p)^2 (2 p-1)^3 \left(6 p^2-13 p+8\right)}{(p-1)^2 p^3 \left(2 p^3-7 p^2+12 p-6\right)} $\\
		\bottomrule[\heavyrulewidth] 
	\end{tabular}
\end{table}
\begin{table}
	\caption{Kurtosis of the infinite product distribution for different $ A $ densities. In each case the obtain formulas are equal to $ \infty $ if the denominator are equal or less to zero.}  
	\label{tab:example_multirow}
	\centering 
	\begin{tabular}{lc} 
		\toprule[\heavyrulewidth]\toprule[\heavyrulewidth]
		\textbf{$ A\sim $}  & \textbf{$ \dist{Kurtosis}[X] $} \\ 
		\midrule
		$ \dist{Po}(\lambda) $ & $ \dfrac{(\lambda -1)^4 \left(\lambda ^2+\lambda -1\right)^3 \left(3 \lambda ^6-25 \lambda ^5-55 \lambda ^4-32 \lambda ^3-47 \lambda ^2-11 \lambda -1\right)}{\lambda ^3 (\lambda +1)^2 \left(\lambda ^2+2 \lambda -1\right) \left(\lambda ^3+5 \lambda ^2+2 \lambda -1\right)} $\\[4ex]
		
		$ \dist{Bin}(n,p) $ &$ \left((n p-1)^4 \left(n^2 p^2-n (p-1) p-1\right)^3\right. $\\
		&$ \left(1+(11 n-6) p+\left(47 n^2-65 n+6\right) p^2+n \left(32 n^2-165 n+138\right) p^3+\right. $\\
		&$ n \left(55 n^3-177 n^2+243 n-120\right) p^4+n \left(25 n^4-184 n^3+339 n^2-216 n+36\right) p^5- $\\
		&$ \left.\left.\left.3 (n-1)^2 n^2 \left(n^2+4 n-12\right) p^7+3 n^2 \left(n^4+10 n^3-62 n^2+93 n-42\right) p^6\right)\right)\right/ $\\
		&$ \left(n^3 (p-1)^3 p^3 \left(n^3 p^3-3 n^2 (p-1) p^2+n \left(2 p^2-3 p+1\right) p-1\right)\right. $\\
		&$ \left.n^4 p^4-6 n^3 (p-1) p^3+n^2 \left(11 p^2-18 p+7\right) p^2+n \left(-6 p^4+12 p^3-7 p^2+p\right)-1\right) $ \\[4ex]
		
		$ \dist{NB(r,p)} $  & $ -\left(\left((p r+p-r)^4 \left(p^2 \left(r^2-1\right)-p r (2 r+1)+r (r+1)\right)^3\right.\right. $\\
		&$\left(r^2 (r+1)^2 \left(3 r^3+5 r^2-3 r+6\right)-2 p r^2 \left(12 r^5+32 r^4+24 r^3+16 r^2+15 r+3\right)+\right.$\\
		&$ p^3 r \left(-168 r^6-106 r^5-140 r^4-29 r^3+76 r^2+35 r+10\right)+ $\\
		&$ p^6 r \left(84 r^6-211 r^5+238 r^4-126 r^3+98 r^2-38 r-24\right)+ $\\
		& $ p^2 r \left(84 r^6+139 r^5+98 r^4+68 r^3-2 r^2-18 r+6\right)+ $\\
		& $ p^4 r \left(210 r^6-85 r^5+210 r^4-103 r^3-9 r^2-66 r-19\right)+ $\\
		& $ p^7 \left(-24 r^7+86 r^6-108 r^5+39 r^4+6 r^3-33 r^2+30 r+1\right)+ $\\
		&$ p^8 \left(3 r^7-14 r^6+20 r^5-4 r^4-16 r^3+20 r^2-7 r-2\right)- $\\
		&$ \left.\left.\left.2 p^5 \left(84 r^7-122 r^6+140 r^5-91 r^4+66 r^3-50 r^2-2 r-1\right)\right)\right)\right/ $\\
		&$ \left((p-1)^3 p^4 r^3 \left(p^3 \left(r^2-r+1\right)+p^2 r (2-3 r)+3 p r^2-r (r+1)\right)\right. $\\
		&$ \left(p^3 r \left(-4 r^2+6 r-3\right)+r \left(r^2+3 r+2\right)-\right. $\\
		&$ \left.\left.\left.p^4 \left(r^3-3 r^2+2 r-1\right)+p^2 \left(6 r^3+r\right)+2 p r \left(2 r^2+3 r+1\right)\right)\right)\right) $\\[4ex]
		
		$ \dist{Geo(p)} $ & $ \dfrac{(1-2 p)^4 (3 p-2)^3 \left(42 p^6-173 p^5+105 p^4+435 p^3-872 p^2+642 p-180\right)}{(p-1)^3 p^4 \left(2 p^3-7 p^2+12 p-6\right) \left(15 p^3-50 p^2+60 p-24\right)} $ \\
		\bottomrule[\heavyrulewidth] 
	\end{tabular}
\end{table}

\end{document}